\documentclass[10pt, conference]{IEEEtran}
\IEEEoverridecommandlockouts
\usepackage{cite}
\usepackage{amsmath,amssymb,amsfonts}
\usepackage{graphicx}
\usepackage{textcomp}
\usepackage{xcolor}
\usepackage{bbm}
\usepackage{comment}
\usepackage{algorithm2e}
\usepackage{tikz, pgfplots}
\pgfplotsset{compat=1.18}

\usetikzlibrary{positioning}
\usetikzlibrary{arrows.meta}
\usetikzlibrary{shapes.geometric}
\usetikzlibrary{intersections}
\usetikzlibrary{calc}
\def\BibTeX{{\rm B\kern-.05em{\sc i\kern-.025em b}\kern-.08em
    T\kern-.1667em\lower.7ex\hbox{E}\kern-.125emX}}

\usepackage{booktabs}
\usepackage{array}
\usepackage{siunitx}

\DeclareMathOperator*{\argmin}{arg\,min}
\DeclareMathOperator{\sign}{sign}

\newtheorem{prop}{Proposition}
\newtheorem{remark}{Remark}

\makeatletter
\newcommand{\linebreakand}{%
  \end{@IEEEauthorhalign}
  \hfill\mbox{}\par
  \mbox{}\hfill\begin{@IEEEauthorhalign}
}
\makeatother

\title{Efficient Demodulation for Coset Codes on Lattices\\

\thanks{This research was supported in part by the U.S. Department of Commerce’s National Telecommunications and Information Administration (NTIA) under the Public Wireless Supply Chain Innovation Fund Grant Program (Award 24-60-IF2415: ASPEN - Advanced Signal Processing Enhancement for Next-Generation Open Radio Units), administered by the National Institute of Standards and Technology.}
}

\author{
\IEEEauthorblockN{Chloe Makdad}
\IEEEauthorblockA{\textit{Rampart Communications}\\
Linthicum Heights, MD, USA \\
cmakdad@rampartcommunications.com}
\and
\IEEEauthorblockN{Leopold Bertholet}
\IEEEauthorblockA{\textit{Rampart Communications}\\
Linthicum Heights, MD, USA \\
lbertholet@rampartcommunications.com}
\and
\IEEEauthorblockN{Mairead Cummins}
\IEEEauthorblockA{\textit{Rampart Communications}\\
Linthicum Heights, MD, USA \\
mairead@rampartcommunications.com}
\linebreakand
\IEEEauthorblockN{Stephen Mackes}
\IEEEauthorblockA{\textit{Rampart Communications}\\
Linthicum Heights, MD, USA \\
stephen@rampartcommunications.com}
\and
\IEEEauthorblockN{Matthew Robinson}
\IEEEauthorblockA{\textit{Rampart Communications}\\
Linthicum Heights, MD, USA \\
matt@rampartcommunications.com}
}

\begin{document}

% \addtolength{\textfloatsep}{-10pt}
% \addtolength{\abovecaptionskip}{-1pt}

\maketitle

\begin{abstract}
Lattice constellations can provide significant coding and shaping gains over traditional quadrature amplitude modulations (QAMs), but have not been widely adopted due to the high implementation complexity of modulation, demodulation, and decoding. Voronoi constellations (VCs) enable an efficient modulation map that achieves shaping gain using the Voronoi cell of a lattice. Multilevel coset codes (MLCCs) are a recently proposed coded modulation scheme for VCs that realizes more efficient decoding. However, MLCCs still suffer from a soft demodulation complexity of $O(d2^{sd})$ for a $d$-dimensional lattice with spectral efficiency $s$ bits per real dimension. In this paper, we propose fast quantize-walk demodulation (FQWD), an efficient demodulation scheme for MLCCs that reduces the complexity to $O(sd2^d)$. Our strategy utilizes novel radial projections onto the Voronoi cells of the $D_4$ and $E_8$ lattices. Experimental results demonstrate that FQWD has only negligible loss in block error rate (BLER) relative to maximum likelihood demodulation under the additive white Gaussian noise (AWGN) channel. Using FQWD, we simulate MLCCs on higher-dimensional lattices with $s>2$ bits per dimension under AGWN for the first time, achieving BLER gains over multilevel polar codes on QAMs of up to 0.6 dB for the $D_4$ lattice and 1.2 dB for $E_8$.
\end{abstract}

\section{Introduction}

\IEEEPARstart{T}{he} use of higher-dimensional lattices to build signal constellations has long been known to provide gain over traditional quadrature amplitude modulations (QAMs) \cite{forney_efficient_modulation}. There are two ways to use lattices to determine constellations. First, Voronoi constellations (VCs) consist of any constellation shaped by the Voronoi cell of a dense lattice. Second, lattice-valued constellations are subsets of shifted dense lattices which can have any shaping region. Lattice-valued constellations exhibit coding gain over QAMs, especially in the absence of forward error correction (FEC), or when the FEC has a high code rate \cite{TLDC}, \cite{li2025coded}, \cite{frey2020twostage}. Each approach to lattice-based constellations can be applied independently: \cite{li2025coded} presents VCs valued on the integers, while \cite{frey2020twostage} describes lattice-valued constellations that are shaped by hypercubes. In this paper, we follow \cite{conway1983fast} in shaping a lattice-valued constellation with its own Voronoi cell, resulting in both geometric shaping gain and lattice coding gain over QAMs in the additive white Gaussian noise (AWGN) channel.

Despite their performance advantages, lattice constellations have found limited applicability due to their high implementation complexity, especially when used in conjunction with FEC. For high-dimensional lattices, the brute-force algorithms for modulation, demodulation, and decoding are often not practically feasible \cite{forney1988coset}. For lattices with a fast quantizer, the problem of efficient modulation was resolved by the bit-to-symbol mapping proposed in \cite{conway1983fast}. The use of sphere decoding offers a potential approach to demodulation via lattice search \cite{viterbo1999universal}. However, sphere decoding is inherently designed for infinite lattices and is not naturally adapted to Voronoi shaping regions. Efficient decoding can be achieved with multilevel coset codes (MLCCs) \cite{CosetCodes}, a recently proposed coded modulation scheme for VCs after the manner of \cite{forney1988coset}. For a $d$-dimensional constellation with a spectral efficiency of $s$ bits per real dimension and $N$ constellation points per codeword, the decoder operates with complexity $O(s2^d\cdot N\log N)$, improving over a naive decoding complexity of $O(2^{sd}\cdot N\log N)$. However, MLCCs still suffer from computationally expensive demodulation, requiring $O(d2^{sd})$ operations per lattice point.

In this work, we propose fast quantize-walk demodulation (FQWD), a new demodulation strategy for MLCCs. We apply FQWD to the $D_4$ and $E_8$ lattices, achieving a complexity of $O(sd2^d)$ operations per received lattice point. The resulting coded modulation achieves fast modulation, demodulation, and decoding, making it feasible for certain dense lattices. The proposed algorithm relies on a fast quantizer for the chosen lattice, as well as an efficient Voronoi projection function that maps received points in $\mathbb{R}^d$ to the nearest point of the lattice Voronoi cell $\mathcal{V}$. We approximate the Voronoi projection function by novel radial projection functions on each of the $D_4$ and $E_8$ lattices. The functions project a point $\bar{y}$ onto $\mathcal{V}$ along a line segment connecting $\bar{y}$ to the origin. Using the geometry of the $D_4$ and $E_8$ lattices, we give efficient implementations of their radial projection functions. Finally, we present block error rate (BLER) curves for MLCCs comparing FQWD to maximum likelihood (ML) soft demodulation. Results show that FQWD displays negligible loss with appropriate parameter selection.

\section{Background}

We outline the lattice constellation and modulation map used in this work before describing MLCCs. Then, we review existing work on the quantizers needed for FQWD.

\subsection{Fast Lattice Modulation}\label{sec:Boomerang}

The lattice modulation map $\Phi$ was originally introduced in \cite{conway1983fast}, and takes in integers in $\mathbb{Z}/r\mathbb{Z}$ for some $r = 2^s$. In our implementation, bits are first grouped into tuples of size $s$, and mapped to integers in $\mathbb{Z}/r\mathbb{Z}$ via the canonical binary expansion. Then, $\Phi$ maps a tuple of $d$ mod-$r$ integers to a point in a full-rank lattice $\Lambda \subset \mathbb{R}^d$:
\begin{align}
    \Phi: \left(\mathbb{Z}/r\mathbb{Z}\right)^d \to \Lambda + \bar{p} \subset \mathbb{R}^d. \label{eq:modulation}
\end{align}
Note that the lattice $\Lambda$ is shifted by a fixed vector $\bar{p}$, chosen to minimize the average power of the constellation $\mathcal{C}$. To achieve Voronoi shaping, all constellation points lie in $r\mathcal{V}$, where $\mathcal{V}$ is the Voronoi cell of $\Lambda$. Thus, in an MLCC, the constellation points are shifted to a subset of $\Lambda + \bar{p}$, while the shaping region remains centered at the origin.

Before shaping and the shift by $\bar{p}$, the mapping from mod-$r$ integers to lattice points is given by the lattice basis matrix $M$. As a result, ignoring the shift by $\bar{p}$, $\Phi$ descends to a group isomorphism $\tilde{\Phi}: \left( \mathbb{Z}/r\mathbb{Z} \right)^d \to \Lambda/ r\Lambda$, given by $\tilde{\Phi}(\bar{x}) = M\bar{x} + r\Lambda$. Then, for each lattice point, the shift by $\bar{p}$ is applied, and the shaping step selects a coset representative in $r\mathcal{V}$.

\subsection{Multilevel Coset Codes}

MLCCs are a generalization of multilevel codes (MLCs) initially proposed in \cite{CosetCodes}, and are based on \cite{forney1988coset}. They arise from the lattice partition chain
\[
    \Lambda \supset 2\Lambda \supset 4\Lambda \supset \cdots \supset r\Lambda.
\]
Using the modulation map (\ref{eq:modulation}), an MLCC has $s$ levels, where level $\ell$ encodes bits on the lattice quotient $\Lambda^{(\ell)} = 2^{\ell}\Lambda / 2^{\ell + 1}\Lambda$ for $0 \leq \ell < s$. In particular, level $\ell = 0$ encodes the least significant bits (LSBs) of the integer vector inputs to $\Phi$. Since for all $\ell$, $\Lambda^{(\ell)} \cong \Lambda / 2 \Lambda$, each level is treated as a code on $\Lambda/2\Lambda$ with $r=2$, rather than on $\Lambda/r\Lambda$.

Multi-stage decoding (MSD) decodes these bit levels from LSB to most significant bits (MSBs). While the vector of log-likelihood ratios (LLRs) or the probability mass function (PMF) for a lattice symbol in $\Lambda/r\Lambda$ requires $r^d$ entries, each decoding level operates on soft information of size $2^d$ \cite[II.C]{CosetCodes}. The multilevel approach thereby enables a decoding speedup. However, the level-$\ell$ soft information for a constellation point $\bar{z}$ sent under AWGN is obtained by repeatedly marginalizing and conditioning soft information of size $r^d$ over the full lattice. Given a received $\bar{y} \in \mathbb{R}^d$, the full PMF is
\begin{align}
    P(\bar{z} = \bar{x}\mid \bar{y}) \propto \exp\left(-\frac{1}{2\sigma^2}
    \lVert\bar{y} - \bar{x}\rVert^2\right) \label{eq:full_demod}
\end{align}
for each $\bar{x}\in \mathcal{C}$. As a result, the demodulation step remains a bottleneck for MLCCs, with complexity $O(dr^d)$. FQWD addresses this by computing the soft information on $\Lambda^{(\ell)}$ for level $\ell$ directly from received data, bypassing the need for soft information on $\Lambda/r\Lambda$.

 \subsection{Lattice and Coset Quantizers} \label{sec:quantizers}

Both FQWD and the efficient implementation of the modulation map (\ref{eq:modulation}) require a fast quantization function $Q_\Lambda: \mathbb{R}^d \to \Lambda$ for the lattice. Specifically, the quantizer can be used to test if $r\mathcal{V}$ contains a point $\bar{y} \in \mathbb{R}^d$ by checking if $\bar{y}/r$ lies in $\mathcal{V}$, or equivalently, if $Q_\Lambda (\bar{y}/r) = \bar{0}$. For our purposes, a quantizer input $\bar{y}$ will lie on the boundary of $\mathcal{V}$ with probability zero, so the definition of $Q_\Lambda$ on that set can be chosen arbitrarily.

FQWD further makes use of quantizers for shifted sublattices of $\Lambda$. Given $\bar{h} \in \Lambda$ and $n \in \mathbb{Z}$, the quantizer to $\bar{h} + n\Lambda$ is given by \cite{conway1983fast} as
\begin{align}
    Q_{\bar{h}+n\Lambda} (\bar{y}) = n Q_\Lambda \left(\frac{\bar{y} - \bar{h}}{n}\right) + \bar{h}. \label{eq:subset_quantizer}
\end{align}
Given a received point $\bar{y}$ and a fixed $\bar{h}$, an MLCC's level-$\ell$ demodulator must compute the distance from $\bar{y}$ to the closest representative $R_{\bar{x}}$ of $\bar{h} + \bar{x} + 2^{\ell + 1}\Lambda$ for each coset ${\bar{x} \in \Lambda^{(\ell)}}$. The log-sum-exp approximation is used to calculate the LLR, which is why only the closest representative is considered. When the shaping region is ignored, the LLR for each $\bar{x}$ is proportional to the squared distance $\| \bar{y} - R_{\bar{x}} \|^2$, up to a sign. However, this incurs significant loss. In our algorithm, we still select the closest member of each coset, but additionally enforce that the quantized point is in $r\mathcal{V}$.

\section{Fast Demodulation on Lattices}

% The points of the lsb = (0, 1) coset that we want to make xs 

\begin{figure}
    \centering
    \scalebox{.95}{

    \begin{tikzpicture}[point/.style = {fill = black, 
                                        minimum size = 0.15cm,
                                        shape = circle,
                                        inner sep = 0pt,
                                        node font = \tiny
                        },
                        every label/.style={font=\footnotesize},
                        >=Latex]

        \def\cosetatwocoords{1.0/0.0, 0.0/1.732, -1.0/0.0, 0.0/-1.732, 3.0/0.0, 2.0/1.732}
                        
        % Shaping reigon
        \node[shape = regular polygon, regular polygon sides = 6, draw, fill = blue!10, minimum size = 5cm] (hex) at (0, 0) {};
        \node[text = blue, below = of hex.corner 2] {$r\mathcal{V}$};
        % Side closest to received point
        \path [name path = hexSide] (hex.corner 1) -- (hex.corner 6);

        % this is the coset where lsbs = (1, 0)
        \foreach \x/\y [count=\z] in \cosetatwocoords{
            \node[font=\tiny] (lat\z) at (\x, \y) {$\mathbf{\times}$};
        } 
        
        % 0 vector, revieved point, and the ray between them
        \node[point, label=below:$\bar{0}$] (zero) at (0, 0) {};
        \node[point, label=below:$\bar{y}$] (rec_y) at (4, 2) {};
        \draw[name path = steppingLine, dotted, gray] (zero) -- (rec_y);

        % Projected point
        \path [name intersections={of=steppingLine and hexSide, by=intZ}];
        \node[point, label = below:$\bar{u}_0$] (proj_z) at (intZ) {};

        % Stepping points
        \path (proj_z.center) -- (zero.center)
            node[point, pos = 0.2, label = below:$\bar{u}_1$] (z1) {}
            node[point, pos = 0.4, label = above:$\bar{u}_2$] (z2) {};

        % Quantized points
        \node[point, fill = none, draw = red, label={[text=red]above:$\bar{q}_0{,}$ $\bar{q}_1$}] (q1) at (lat6) {};
        \node[point, fill = none, draw = green!55!black, label={[text=green!55!black]below:$\bar{q}_2$}] (q2) at (lat1) {};
        \draw[->, red, dashed] (proj_z) -- (q1);
        \draw[->, red, dashed] (z1) -- (q1);
        \draw[->, green!55!black, dashed] (z2) -- (q2);

    \end{tikzpicture}
    }
    \caption{Illustration of an example of FQWD on the $A_2$ lattice. The received point $\bar{y}$ is projected onto the boundary at $\bar{u}_0 = \pi_{r\mathcal{V}}(\bar{y})$. The algorithm steps along the radial ray with a step size of $\bar{w}$ to points $\bar{u}_k = \bar{u}_0 - k\bar{w}$. Each stepped point is quantized to the $A_2$ lattice points within the desired coset ($\times$) until a representative $\bar{q}_k$ lands inside the shaping region $r\mathcal{V}$.}
    \label{fig:fast_demod_geometry}
\end{figure}

We introduce the radial Voronoi projection functions before describing the FQWD algorithm, treating the specific functions as a black box. Then, we derive the necessary radial projection functions for the $D_4$ and $E_8$ lattices.

\subsection{Voronoi Projection Functions}\label{sec:voronoi_projection}

The FQWD algorithm makes use of a projection function from $\mathbb{R}^d$ onto the boundary of a Voronoi cell. For a full-rank lattice $\Lambda \subset \mathbb{R}^d$, the Voronoi cell $\mathcal{V}$ is always a closed convex polyhedral solid with finitely many faces. These properties guarantee that the nearest-point projection function
\begin{align*}
    \tilde{\pi}_{\mathcal{V}}(\bar{y}) = \argmin_{\bar{v} \in \mathcal{V}}{\lVert \bar{y} - \bar{v}\rVert}
\end{align*} is computable. To achieve low complexity, we instead present fast radial projections that approximate $\tilde{\pi}_{\mathcal{V}}(\bar{y})$.

The radial projection $\pi_{\mathcal{V}} : \mathbb{R}^d \to \mathcal{V}$ is defined by $\pi_{\mathcal{V}}(\bar{y}) = \lambda \bar{y}$ where $\lambda=1$ if $\bar{y} \in \mathcal{V}$. Otherwise, if $\bar{y} \notin \mathcal{V}$, then $\lambda \bar{y}$ is the intersection of the boundary of $\mathcal{V}$ and the line segment between the origin and $\bar{y}$. We provide an efficient implementation of this function for the $D_4$ and $E_8$ lattices in Sections \ref{sec:D4voronoi} and \ref{sec:E8voronoi}. 

\subsection{System Model}

Let $M$ be the generator matrix for $\Lambda$ and let $\bar{p}$ be the shift vector in the modulation map (\ref{eq:modulation}). In an MLCC codeword, each sent lattice point $\bar{z} \in \Lambda + \bar{p}$ is treated as a sum of $s$ components $\bar{z}^{(i)} \in \Lambda^{(i)}$:
\begin{align*}
    \bar{z} = \bar{z}^{(0)} + \bar{z}^{(1)} + \cdots + \bar{z}^{(s -1)} + \bar{p}.
\end{align*}
During stage $\ell$ of decoding, the decoder takes the corresponding noisy lattice point $\bar{y}\in\mathbb{R}^d$ received under AWGN with noise variance $\sigma^2$, and estimates component $\bar{z}^{(\ell)}$ given known values for the previous components $\bar{h} = \bar{z}^{(0)} + \cdots + \bar{z}^{(\ell-1)} \in \Lambda/2^{\ell}\Lambda$, where we set $\bar{h} = \bar{0}$ for level $\ell = 0$. The soft information input to decoding stage $\ell$ is the probability of each possible value $\bar{x} \in \Lambda^{(\ell)}$ for $\bar{z}^{(\ell)}$:
\begin{align}
    P(\bar{z}^{(\ell)} = \bar{x} \mid \bar{y}, \bar{h}) \propto \exp\left(-\frac{1}{2\sigma^2}D[\bar{x}]\right), \label{eq:pmf}
\end{align}
where $D[\bar{x}] = \lVert\bar{y} - \bar{q}_{\bar{x}}\rVert^2$, and $\bar{q}_{\bar{x}}$ is the representative of $\bar{x} + \bar{h} + \bar{p} + 2^{\ell+1}\Lambda$ closest to $\bar{y}$ in the constellation. This LLR is output for each $\bar{x} \in \Lambda^{(\ell)}$, and is used by the stage-$\ell$ decoder .

\subsection{The FQWD Algorithm}

The proposed algorithm is shown in Algorithm \ref{alg:demod}, and an example on the $A_2$ lattice is illustrated in Figure \ref{fig:fast_demod_geometry}.

\SetFuncSty{textrm}
\RestyleAlgo{ruled}
\SetCommentSty{itshape}
\SetArgSty{textnormal}
\begin{algorithm}[t]
\caption{FQWD for received point $\bar{y}$, level $\ell$, and known previous-level values $\bar{h}$.}\label{alg:demod}
\SetKwRepeat{DoWhile}{do}{while}
\SetKwProg{Fn}{Function}{\{}{\}}
\SetKwFunction{FDemod}{Demodulate}
\SetKwFunction{FCosetQuantize}{CosetQuantize}
\SetKwFunction{FShapingCheck}{ShapingCheck}
\SetAlgoLined % Turn off lines for outer function block
\Fn{\FDemod{$\bar{y}$, $\ell$, $\bar{h}$}}{
    \SetAlgoVlined % Turn lines back on
    \tcp{Project received to shaping region}
    $\bar{v} \gets \pi_{r\mathcal{V}}(\bar{y})$\;
    \tcp{Define step size}
    $\epsilon \gets 2^{\ell+1} c_\Lambda/K$\;
    $\bar{w} \gets \epsilon \bar{v} / \|\bar{v}\|$\;
    \tcp{Store squared distance to each coset}
    $D[\bar{x}] \gets 0$, $\bar{x} \in \Lambda^{(\ell)}$\;
    \tcp{Track which cosets have representatives identified}
    $C[\bar{x}] \gets \text{false}$, $\bar{x} \in \Lambda^{(\ell)}$\;

    $\bar{u} \gets\bar{v}$\;
    \While{not all $C[\bar{x}]$ true}{
        \ForEach{$\bar{x} \in \Lambda^{(\ell)}$ such that $!C[\bar{x}]$}{
            \tcp{Find nearest coset representative}
            $\bar{q}_{\bar{x}} \gets R_{\bar{x}}^{(\ell)}(\bar{u}\mid\bar{h})$\;
            \tcp{Check if lies in shaping region}
            \If{$\bar{q}_{\bar{x}} \in r\mathcal{V}$}{
                $C[\bar{x}] \gets \text{true}$\;
                $D[\bar{x}] = \|\bar{y} - \bar{q}_{\bar{x}}\|^2$\;
            }
        }
        \tcp{Avoid stepping past origin}
        \eIf{$\lVert \bar{w} \rVert < \lVert \bar{u} \rVert$}{
            \tcp{Push $\bar{u}$ closer to origin}
            $\bar{u} \gets \bar{u} - \bar{w}$\;
        }
        {
            $\bar{u} \gets \bar{0}$\;
        }
    }

    \Return{$D$}\;
    % \SetAlgoNoLine % Turn off lines again
    \SetAlgoLined
}
\end{algorithm}

The FQWD algorithm first projects $\bar{y}$ into $r\mathcal{V}$, then iteratively quantizes and takes steps towards the origin if the quantized point is not within $r\mathcal{V}$. Let $\epsilon = \frac{2^{\ell+1} c_\Lambda}{K}$, where $c_\Lambda$ denotes the covering radius of $\Lambda$, defined in \cite{conway1999sphere}. The maximum number of steps is $K$, so $K+1$ is the maximum number of iterations. For good performance, $K=64$ is seen to be sufficient, though some configurations permit smaller $K$ values, as described later. Given a shift from previous levels $\bar{h}$, let $R_{\bar{x}}^{(\ell)}(\bar{u}\mid\bar{h})$ denote the closest point to $\bar{y}$ within a level-$\ell$ coset $\bar{x} \in \Lambda^{(\ell)}$. It is computed using the quantizer for shifted sublattices (\ref{eq:subset_quantizer}):
\[
    R_{\bar{x}}^{(\ell)}(\bar{u}\mid\bar{h}) = Q_{\bar{x}+\bar{h} + \bar{p} + 2^{\ell+1}\Lambda}(\bar{u}).
\]

The algorithm proceeds as follows. Let $\bar{v} = \pi_{r\mathcal{V}}(\bar{y}) = r\pi_{\mathcal{V}}(\bar{y}/r)$ be the initial radial projection of the received point to the shaping region, as defined in Section \ref{sec:voronoi_projection}. We set $\bar{w} = \epsilon \bar{v}/\|\bar{v}\|$, which will be used to repeatedly step into the shaping region. Initialize the evaluation point $\bar{u} \gets \bar{v}$.

For each coset $\bar{x} \in \Lambda^{(\ell)}$ compute the closest candidate $\bar{q}_{\bar{x}} = R_{\bar{x}}^{(\ell)}(\bar{u}\mid\bar{h})$ in that lattice, given the known shifts $\bar{p}$ and $\bar{h}$. Then, verify whether $\bar{q}_{\bar{x}}$ is in $r\mathcal{V}$ using the lattice quantizer $Q_{\Lambda}$, as described in Section \ref{sec:quantizers}. If $\bar{q}_{\bar{x}} \in r\mathcal{V}$, set $\bar{q}_{\bar{x}}$ as the optimal representative, and mark coset $\bar{x}$ completed. While any cosets remain incomplete, update $\bar{u} \gets \bar{u} - \bar{w}$ step-by-step along the radial ray between $\bar{0}$ and $\bar{y}$ until $\bar{q}_{\bar{x}} \in r\mathcal{V}$. If $\bar{u}$ steps past the origin, we set it to $\bar{0}$, where the algorithm is guaranteed to terminate. We can detect when this case occurs by checking if $\lVert \bar{w} \rVert > \lVert \bar{u} \rVert$, or more simply, by checking if any nonzero coordinate of $\bar{u}$ changes sign. Finally, we set $D[\bar{x}] = \| \bar{y} - \bar{q}_{\bar{x}}\|^2$.

\begin{prop}[Algorithm Termination]
    After $K$ steps, each coset $\bar{x}$ will have a representative $\bar{q}_{\bar{x}} \in r\mathcal{V}$.
\end{prop}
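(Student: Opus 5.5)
The plan is to find a region of $r\mathcal{V}$ where the coset quantizer is guaranteed to land inside the shaping region, and then show that the walk $\bar{u}_k = \bar{v} - k\bar{w}$ enters that region within $K$ steps. Write $m = 2^{\ell+1}$ and let $\Gamma = \bar{x}+\bar{h}+\bar{p}+m\Lambda$ be the coset being quantized. By (\ref{eq:subset_quantizer}),
\[
R^{(\ell)}_{\bar{x}}(\bar{u}\mid\bar{h}) = m\,Q_\Lambda\!\left(\frac{\bar{u}-\bar{g}}{m}\right)+\bar{g}, \qquad \bar{g}=\bar{x}+\bar{h}+\bar{p}.
\]
Since $Q_\Lambda(\bar{t})$ is a lattice point whose Voronoi cell contains $\bar{t}$, the error $\bar{e} = \bar{u} - R^{(\ell)}_{\bar{x}}(\bar{u}\mid\bar{h})$ lies in $m\mathcal{V}$. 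Because $\mathcal{V}$ is centrally symmetric, so does $-\bar{e}$. Hence for every $\bar{u}$,
\[
\bar{q}_{\bar{x}} \in \bar{u} + m\mathcal{V}.
\]

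The first step uses convexity. For convex $\mathcal{V}$ and $a,b\ge 0$ we have the Minkowski identity $a\mathcal{V}+b\mathcal{V}=(a+b)\mathcal{V}$. Since $\ell<s$ gives $m\le r$, it follows that whenever $\bar{u}\in (r-m)\mathcal{V}$,
\[
\bar{q}_{\bar{x}}\in (r-m)\mathcal{V}+m\mathcal{V} = r\mathcal{V}.
\]
This holds for every coset $\bar{x}$ at once. So it suffices to show that some iterate $\bar{u}_k$ with $k\le K$ lies in $(r-m)\mathcal{V}$. In particular, $\bar{u}=\bar{0}$ always works, which is the guaranteed-termination claim in the text.

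The second step is the radial estimate. If $\bar{y}\in r\mathcal{V}$ already, then $\bar{v}=\bar{y}\in r\mathcal{V}$; otherwise $\bar{v}$ lies on $\partial(r\mathcal{V})$. In either case $\bar{v}=\lambda\bar{b}$ for some $\lambda\le 1$, where $\bar{b}$ is the boundary point of $r\mathcal{V}$ on the ray through $\bar{v}$. Because $\mathcal{V}$ is star-shaped about the origin, a point $\tau\bar{b}$ with $0\le\tau\le 1$ lies in $(r-m)\mathcal{V}$ iff $\tau\le 1-m/r$. Thus it is enough to move at least $\frac{m}{r}\|\bar{b}\|$ toward the origin from $\bar{b}$; starting from $\bar{v}$, the remaining required distance is no larger. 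Every point of $\mathcal{V}$ is within $c_\Lambda$ of the origin, because the origin is its closest lattice point. Therefore $\|\bar{b}\|\le r c_\Lambda$, and the required distance is at most $m c_\Lambda = K\epsilon = K\|\bar{w}\|$. After $K$ steps, $\bar{u}_K$ is therefore either $\tau\bar{b}$ with $\tau\le 1-m/r$, or it has been clipped to $\bar{0}$. In both cases $\bar{u}_K\in(r-m)\mathcal{V}$, so every remaining coset is completed at that iteration.

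The main obstacle is the first step: obtaining a clean inclusion for the quantized point, rather than a naive ball argument. The ball argument fails because radial distance to the boundary is not perpendicular distance. The resolution is to measure the quantization error in the shape of $m\mathcal{V}$ itself and to combine this with the Minkowski additivity of scaled convex bodies. One detail must be handled: boundary ties in $Q_\Lambda$. These affect only a measure-zero set, and since both $(r-m)\mathcal{V}$ and $m\mathcal{V}$ are closed, the inclusion into the closed set $r\mathcal{V}$ still holds.
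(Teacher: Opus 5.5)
Your proof is correct and follows essentially the same route as the paper. Both arguments use $\|\bar{v}\|\le rc_\Lambda$ to show that $\bar{u}_K$ is a radial contraction of $\bar{v}$ by a factor at most $1-2^{\ell+1}/r$ (or is clipped to $\bar{0}$), and then conclude from $\bar{q}_{\bar{x}}\in\bar{u}_K+2^{\ell+1}\mathcal{V}$ and convexity, $\alpha\mathcal{S}+\beta\mathcal{S}\subseteq\mathcal{S}$ for $\alpha+\beta\le 1$; you differ only in parametrizing by the boundary point $\bar{b}$ rather than $\bar{v}$, and in spelling out the central-symmetry justification of $\bar{q}_{\bar{x}}\in\bar{u}+m\mathcal{V}$, which the paper asserts without comment.
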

\begin{IEEEproof}
Assume first that $\bar{u}$ does not step past the origin. Then, at the $K^\text{th}$ step, the value of $\bar{u}$ is
\begin{align*}
    \bar{u}_K &= \bar{v} - K\bar{w} = \bar{v}\left(1 - \frac{2^{\ell+1}c_\Lambda}{\lVert \bar{v} \rVert}\right).
\end{align*}
Since $\bar{v} \in r\mathcal{V}$ and $\lVert \bar{v} \rVert \leq rc_{\Lambda}$, we can write $\bar{u}_K = \alpha\bar{v}$, where
\begin{align}
    \alpha& \leq  1 - \frac{2^{\ell+1}c_\Lambda}{r c_\Lambda} =1- \frac{1}{2^{s-\ell-1}}. \label{eq:alpha_ineq}
\end{align}
Because $\ell < s$, inequality (\ref{eq:alpha_ineq}) also applies when $\bar{u}_K$ is set to $\bar{0}$, in which case $\alpha=0$.

Let $\mathcal{S} = r\mathcal{V}$ denote the shaping region. The nearest representative $\bar{q}_{\bar{x}}$ to $\bar{u}_K$ must lie in the scaled Voronoi cell centered at $\bar{u}_K$,
\begin{align*}
    \bar{q}_{\bar{x}} \in \bar{u}_K + 2^{\ell + 1}\mathcal{V} = \alpha \bar{v} + \beta \mathcal{S}
\end{align*}
where $\beta = 1/2^{s-\ell-1}$. Since $\mathcal{S}$ is a convex set containing the origin, $\bar{v} \in \mathcal{S}$, and $\alpha,\beta$ are nonnegative scalars satisfying $\alpha + \beta \leq 1$,
\begin{align*}
    \alpha \bar{v} + \beta \mathcal{S} \subseteq \mathcal{S}.
\end{align*}
Hence, $\bar{q}_{\bar{x}}$ lies in $\mathcal{S}$.
\end{IEEEproof}

\begin{remark}[Lossless MSB speedup] \label{rem:fqwd_speedup} Note that at the final level $\ell = s - 1$, each coset contains only a single representative within $r\mathcal{V}$. For implementation efficiency, one can bypass steps $0, \dots, K-1$ by evaluating directly at $\bar{u}_K = \bar{0}$, or execute ML demodulation at this top level.
\end{remark}

\begin{remark}[Values for $K$] While FQWD must produce a representative in $r\mathcal{V}$ for each coset after $K$ steps, it is not guaranteed to be the nearest one. If a selected representative is not the nearest of its coset to $\bar{y}$, the calculated LLRs will be incorrect, resulting in a less accurate decode. In practice, we demonstrate negligible loss on $D_4$ in BLER compared to ML when $K=16$ for $r=4$, or $K=64$ otherwise.
\end{remark}

\subsection{The Projection Function for $D_4$}\label{sec:D4voronoi}
We now construct the radial projection for the Voronoi cell of the $D_4$ lattice as generated by the columns of
\begin{align*}
    M = \left( \begin{array}{c c c c}
        1 & 1 & 0 & 0 \\
        1 & -1 & 1 & 0 \\
        0 & 0 & -1 & 1 \\
        0 & 0 & 0 & -1
    \end{array} \right).
\end{align*}
The $D_4$ lattice has 24 short vectors, consisting of all vectors in $\mathbb{R}^4$ with exactly two nonzero entries that are each $\pm1$ \cite{conway1999sphere}. Let $\bar{y} = (y_1 ,y_2, y_3, y_4) \in \mathbb{R}^4$ be any point other than the origin and $y_{n},y_{m}$ be the two largest coordinates in absolute value, with ties broken arbitrarily. Let $\bar{f} \in \mathbb{R}^4$ be the vector that is $0$ everywhere except in positions $n$ and $m$, where it has values $\sign (y_{n})$ and $\sign (y_{m})$, respectively. Then, $\bar{f}$ is one of the short vectors of the $D_4$ lattice that is closest to $\bar{y}$, because the dot product of any short vector with $\bar{y}$ is maximized by $\bar{f}$.

Since $D_4$ is a root lattice, its short vectors are in bijection with the faces of its Voronoi cell $\mathcal{V}$ \cite{conway1999sphere}. For any short vector $\bar{b}$, $\frac{1}{2}\bar{b}$ is the centroid of the corresponding Voronoi face. This means that the ray from the origin through $\bar{y}$ intersects the face $\mathcal{F}$ corresponding to $\bar{f}$ at $\lambda_0\bar{y}$ for some $\lambda_0 > 0$. If $\bar{f}$ was the unique closest short vector, then the intersection is in the interior of $\mathcal{F}$, otherwise it is on a boundary of $\mathcal{F}$. Using that $\operatorname{proj}_{\langle \bar{f} \rangle}(\lambda_0\bar{y}) = \bar{f}/2$, we obtain
\begin{align}
    \lambda_0 = \dfrac{1}{2} \dfrac{\bar{f}\cdot\bar{f}}{\bar{f}\cdot\bar{y}} = \dfrac{1}{|y_n| + |y_m|}. \label{eq:lambda0_d4}
\end{align}
Note that if $\lambda_0 \ge 1$, $\bar{y} \in \mathcal{V}$. Thus, we can set $\lambda = \min \left( \lambda_0, 1\right)$, and guarantee $\lambda \bar{y}$ is the closest point in $\langle \bar{y} \rangle \cap \mathcal{V}$ to $\bar{y}$, resulting in $\pi_{\mathcal{V}} (\bar{y}) = \lambda \bar{y}$ as the radial projection function to $\mathcal{V}$.

\subsection{The Projection Function for $E_8$}\label{sec:E8voronoi}

The radial projection to the Voronoi cell $\mathcal{V}$ of the $E_8$ lattice is computed similarly to $D_4$. Again, we use that because $E_8$ is a root lattice, its short vectors are in bijection with the faces of $\mathcal{V}$. We use the basis for $E_8$ given by the columns of
{\small
\begin{align*}
    M=\left( \begin{array}{c c c c c c c c}
        1 & 1 & 0 & 0 & 0 & 0 & 0 & 1/2 \\
        1 & -1 & 1 & 0 & 0 & 0 & 0 & 1/2 \\
        0 & 0 & -1 & 1 & 0 & 0 & 0 & 1/2 \\
        0 & 0 & 0 & -1 & 1 & 0 & 0 & 1/2 \\
        0 & 0 & 0 & 0 & -1 & 1 & 0 & 1/2 \\
        0 & 0 & 0 & 0 & 0 & -1 & 1 & 1/2 \\
        0 & 0 & 0 & 0 & 0 & 0 & -1 & 1/2 \\
        0 & 0 & 0 & 0 & 0 & 0 & 0 & 1/2
    \end{array} \right).
\end{align*}
}

The short vectors of $E_8$ are all either short vectors of $D_8$, implying that they are $0$ everywhere except in two coordinates where they are $\pm 1$, or they are $\pm \frac{1}{2}$ everywhere with an even number of negative signs \cite{conway1999sphere}. To identify the short vector $\bar{f}$ closest to a received $\bar{y}$, we generate the optimal choices $\bar{f}_0$ from the $D_8$ short vectors and $\bar{f}_1$ from the remaining vectors before selecting between the two.

Let $\bar{y} \in \mathbb{R}^8$ be a point other than the origin. Let $y_{n}$ and $y_{m}$ be the two entries of $\bar{y}$ with largest absolute value. Then $\bar{f}_0$ is the short vector with all $0$ entries except in positions $n$ and $m$, where it has values $\sign(y_n)$ and $\sign(y_m)$ respectively. 

To find $\bar{f_1}$, first define $\bar{s}$ by $s_i = \sign(y_i) \in \{\pm 1\}$, with $\sign(0)$ defined arbitrarily. If $\bar{s}$ contains an odd number of $-1$ entries, identify the index $j = \argmin_i |y_i|$ corresponding to the coordinate with smallest magnitude $c = |y_j|$ and flip the sign of that entry: $s_j \mapsto -s_j$. This guarantees an even number of negative signs. If $\bar{s}$ already contained an even number of $-1$ entries, set $c=0$. Then, define $\bar{f}_1 = \frac{1}{2} \bar{s}$. 

Finally, choose $\bar{f}$ to be whichever of $\bar{f}_0$ or $\bar{f}_1$ maximizes the dot product with $\bar{y}$, using the simplifications that 
\begin{align*}
\bar{f}_0 \cdot \bar{y} = |y_n|+|y_m| \quad\text{and}\quad \bar{f}_1 \cdot \bar{y} = \frac{1}{2} \sum_{i=1}^8 |y_i| - c.
\end{align*}

After constructing $\bar{f}$ efficiently, we can compute
\begin{align*}
    \lambda_0 = \dfrac{1}{2}\dfrac{\bar{f} \cdot \bar{f}}{\bar{f} \cdot \bar{y}} = \dfrac{1}{\max\{\bar{f}_0 \cdot \bar{y}, \bar{f}_1 \cdot \bar{y} \}}
\end{align*}
as before. Then setting $\lambda = \min \left( \lambda_0,1\right)$, we have $\pi_{\mathcal{V}}(\bar{y}) = \lambda \bar{y}$. We observe that for both $D_4$ and $E_8$, the radial projection function involves only simple arithmetic operations.

\section{Results and Analysis}

We now evaluate FQWD by comparing its computational complexity against ML soft demodulation, followed by BLER performance evaluations in the AWGN channel. In particular, using FQWD provides a substantial complexity improvement with negligible loss in BLER, enabling previously intractable simulations with high $r$ values and $E_8$ lattice constellations.

\subsection{Complexity of FQWD}

\begin{table*}[ht!]
\centering
\caption{Exact Operation Count per Symbol for ML and FQWD}
\label{tab:complexity_comparison}
\footnotesize
\setlength{\tabcolsep}{3pt}
\begin{tabular}{ccc rrr rrr rrr}
\toprule
& & & \multicolumn{3}{c}{\textbf{Real Additions}} & \multicolumn{3}{c}{\textbf{Real Multiplications}} & \multicolumn{3}{c}{\textbf{Quantizations}} \\
\cmidrule(lr){4-6} \cmidrule(lr){7-9} \cmidrule(lr){10-12}
\textbf{Lattice} & $r$ & $K$ & \textbf{ML} & \textbf{FQWD} & \textbf{Ratio $\big(\frac{\text{ML}}{\text{FQWD}}\big)$} & \textbf{ML} & \textbf{FQWD} & \textbf{Ratio} & \textbf{ML} & \textbf{FQWD} & \textbf{Ratio} \\
\midrule
$D_4$ & 4  & 16 & 15,444          & 6,241   & $2.47$             & 17,408          & 132 & $132$             & 256           & 1,024   & $0.250$ \\
      & 8  & 64 & 246,344         & 37,009  & $6.66$             & 278,528         & 196 & $1.42 \times 10^3$ & 4,096         & 6,144   & $0.667 $ \\
      & 16 & 64 & 3,936,528       & 49,345  & $79.8$             & 4,456,448       & 260 & $1.71 \times 10^4$ & 65,536        & 8,192   & $8.00 $ \\
\midrule
$E_8$ & 4  & 64 & 3,954,004       & 393,313 & $10.1$             & 4,456,448       & 132 & $3.38 \times 10^4$ & 65,536        & 65,536  & $1.00 $ \\
      & 8  & 64 & 1,009,029,704   & 589,969 & $1.71 \times 10^3$ & 1,140,850,688   & 196 & $5.82 \times 10^6$ & 16,777,216    & 98,304  & $171 $ \\
      & 16 & 64 & 257,984,368,912 & 786,625 & $3.28 \times 10^5$ & 292,057,776,128 & 260 & $1.12 \times 10^9$ & 4,294,967,296 & 131,072 & $3.28 \times 10^4 $ \\
\bottomrule
\end{tabular}
\end{table*}

The asymptotic computational complexity of ML demodulation is exponential in the lattice dimension $d$ and polynomial in $r$ of degree $d$, scaling as $O(d r^d)$ for total arithmetic operations and quantizations. Assuming efficient $d$-dimensional radial Voronoi projection functions, the complexity of FQWD combined over the $\log_2 r$ levels of MLCC decoding is $O(d K 2^d \log r)$ operations. As $K$ must be chosen large enough to appropriately mitigate loss, the optimal choice of $K$ may depend on $d$. We leave this analysis to future work and report viable $K$ values for the configurations we tested.

We provide counts of the number of real additions, multiplications, and quantizations needed by FQWD and ML in Table \ref{tab:complexity_comparison}. We record data for both the $D_4$ and $E_8$ lattices with $r$ values $4, 8$, and $16$. We use $K=16$ or $64$ depending on $r$ for FQWD and report the number of operations required if all $K$ steps were taken without incorporating the FQWD simplification in Remark \ref{rem:fqwd_speedup}. Note that for each lattice, the quantization process consists of a small number of additions and multiplications.

The ML PMF demodulator computes (\ref{eq:full_demod}) directly, requiring a squared norm for each of the $r^d$ points in the constellation. Those constellation points are either stored in a large lookup table or computed via the modulation described in Section \ref{sec:Boomerang}, which is what we record in Table \ref{tab:complexity_comparison}. Then, it conditions and marginalizes the resulting soft information to obtain soft information for each bit level (\ref{eq:pmf}), requiring addition.

\subsection{BLER Performance under AWGN}

We evaluate the BLER performance of MLCCs in the AWGN channel with three objectives: demonstrating that in an MLCC scheme with MSD, FQWD has negligible loss relative to ML demodulation; illustrating the impact of $K$; and extending the work in \cite{CosetCodes} by showing the performance of MLCCs for $r>4$ on $D_4$ and $E_8$. All decoders are CRC-aided successive cancellation list decoders with list size $8$, and they use the CRC dictated by the 3GPP 5G standard for each given rate $R$ and bit block size $n_{\text{bit}}$ \cite{multiplexing20185g}. We opt to compare with multilevel polar codes and exclude bit-interleaved polar codes, as MLC is known to outperform bit-interleaved coded modulations under AWGN \cite{BICM}. We restrict our MLCCs to the $D_4$ and $E_8$ lattices, and use an $r^2$-QAM for the MLCs to ensure a fair comparison in terms of spectral efficiency. Code design is as described in \cite{CosetCodes}.

Figure \ref{fig:mlvfqwd} shows that, using an appropriate $K$, FQWD demonstrates negligible loss relative to ML demodulation in an MLCC, with a significant reduction in complexity as recorded in Table \ref{tab:complexity_comparison}. Though Figure \ref{fig:mlvfqwd} shows one configuration, we observe this behavior at a wide range of $R$ and $n_{\text{bit}}$ values.

\begin{figure}[!t]
    \centering
    \begin{tikzpicture}
\begin{axis}[
    scale = 0.7,
    name = demod,
    xshift=2cm,
    xlabel={\scriptsize{$E_b/N_0$ (dB)}},
    ylabel={\scriptsize{BLER}},
    xmin=0, xmax=10,
    ymin=1e-4, ymax=1.5,
    ymode=log,
    mark repeat={2.5},
    grid=both,
    grid style={line width=0.2pt, draw=gray!30},
    major grid style={line width=0.4pt, draw=gray!60},
    width=10cm,
    height=8cm,
    mark size=1.5pt,
    % legend pos= outer north east,
    legend style={font=\scriptsize,anchor=center, at={(0.5,-0.4)},legend columns=2}
]

\addplot[
    color=blue,
    mark=*,
    mark options = solid,
    loosely dashed,
    thick,
] coordinates {
    (0, {1})
    (0.25, {1})
    (0.5, {1})
    (0.75, {1})
    (1, {1})
    (1.25, {0.9994292237442922})
    (1.5, {0.9920091324200914})
    (1.75, {0.9803082191780822})
    (2, {0.9189497716894978})
    (2.25, {0.778824200913242})
    (2.5, {0.598458904109589})
    (2.75, {0.3818493150684932})
    (3, {0.209189497716895})
    (3.25, {0.09189497716894977})
    (3.5, {0.03351844506148354})
    (3.75, {0.006551698170042925})
    (4, {0.002011510846759446})
    (4.25, {0.000348132372214941})
    (4.5, {7.167431192660551e-5})
    (4.75, {1.02391874180865e-5})
    
};
\addlegendentry{FQWD, $r =4$}

\addplot[
    color=red,
    mark=*,
    mark options = solid,
    dotted,
    thick,
] coordinates {
    (0, {1})
    (0.25, {1})
    (0.5, {1})
    (0.75, {1})
    (1, {0.9997146118721462})
    (1.25, {0.9997146118721462})
    (1.5, {0.9911529680365296})
    (1.75, {0.9654680365296804})
    (2, {0.9052511415525114})
    (2.25, {0.7699771689497716})
    (2.5, {0.58162100456621})
    (2.75, {0.3661529680365297})
    (3, {0.199486301369863})
    (3.25, {0.08761415525114155})
    (3.5, {0.030859049207673062})
    (3.75, {0.007932436165594537})
    (4, {0.001966289646865008})
    (4.25, {0.000348132372214941})
    (4.5, {8.1913499344692e-5})
    (4.75, {1.535878112712975e-5})
    
};
\addlegendentry{ML, $r = 4$}

\addplot[
    color=blue,
    mark=square*,
    mark options = solid,
    loosely dashed,
    thick,
] coordinates {
    (0, {1})
    (0.5, {1})
    (1, {1})
    (1.5, {1})
    (2, {1})
    (2.5, {1})
    (3, {0.9977266268826371})
    (3.5, {0.9789712986643933})
    (4, {0.868428530832623})
    (4.5, {0.5723216822961068})
    (5, {0.2179596476271668})
    (5.5, {0.038244667810737924})
    (6, {0.004355856973506465})
    (6.5, {0.0003302243605742832})
    (7, {3.4558363315913355e-5})
    (7.5, {3.8398181462125956e-6})

};
\addlegendentry{FQWD, $r = 8$}

\addplot[
    color=red,
    mark=square*,
    mark options = solid,
    dotted,
    thick,
] coordinates {
    (0, {1})
    (0.5, {1})
    (1, {1})
    (1.5, {1})
    (2, {1})
    (2.5, {0.9997155049786629})
    (3, {0.9988620199146515})
    (3.5, {0.9823613086770981})
    (4, {0.8605974395448079})
    (4.5, {0.5852062588904694})
    (5, {0.21621621621621623})
    (5.5, {0.04011461318051576})
    (6, {0.003924401107499699})
    (6.5, {0.0003071807336243871})
    (7, {1.9198795851524193e-5})
    
};
\addlegendentry{ML, $r=8$}

\addplot[
    color=blue,
    mark=triangle*,
    mark options = solid,
    loosely dashed,
    thick,
] coordinates {
    (0, {1})
    (0.5, {1})
    (1, {1})
    (1.5, {1})
    (2, {1})
    (2.5, {1})
    (3, {1})
    (3.5, {1})
    (4, {1})
    (4.5, {1})
    (5, {1})
    (5.5, {0.9954324864401941})
    (6, {0.9349129317727662})
    (6.5, {0.6974022266628604})
    (7, {0.306023408506994})
    (7.5, {0.06666666666666667})
    (8, {0.006501890620676934})
    (8.5, {0.0003484725287489836})
    (9, {5.11959370904325e-6})

};
\addlegendentry{FQWD, $r = 16$}

\addplot[
    color=red,
    mark=triangle*,
    mark options = solid,
    dotted,
    thick,
] coordinates {
   (0, {1})
    (0.5, {1})
    (1, {1})
    (1.5, {1})
    (2, {1})
    (2.5, {1})
    (3, {1})
    (3.5, {1})
    (4, {1})
    (4.5, {1})
    (5, {0.9997145304025121})
    (5.5, {0.9934341992577791})
    (6, {0.931201827005424})
    (6.5, {0.6640022837567799})
    (7, {0.2797602055381102})
    (7.5, {0.05901374292643492})
    (8, {0.004422526248806872})
    (8.5, {0.00017918302778360834})
    (9, {5.1195150795316665e-6})
};
\addlegendentry{ML, $r=16$ }

\end{axis}

\end{tikzpicture}

\vspace{-2em}
    \caption{BLER curves comparing the performance of $D_4$ MLCCs with FQWD and ML demodulations for bit block size $n_{\text{bit}} = 1024$ for $r = 4$ and $16$ and $n_{\text{bit}} = 768$ for $r=8$. All codes have rate $R=1/2$. Using $K=16$ for $r=4$ and $K=64$ for $r=8$ and $16$ shows no noticeable loss from FQWD.}
    \label{fig:mlvfqwd}
\end{figure}

It is worth noting that if $K$ is too small, using FQWD will result in loss compared to ML demodulation. Figure \ref{fig:K} shows that an $r=16$ MLCC with $n_{\text{bit}} = 1024$ and rate $R = 3/4$ using FQWD with $K=4$ yields significant loss when compared to ML in terms of BLER. Increasing $K$ to $16$ reduces the loss significantly, and for $K = 64$ the performance of FQWD is almost indistinguishable from ML.

\begin{figure}
    \centering
    \begin{tikzpicture}
\begin{axis}[
    scale = 0.7,
    name = r8,
    xshift=2cm,
    xlabel={\scriptsize{$E_b/N_0$ (dB)}},
    ylabel={\scriptsize{BLER}},
    xmin=10, xmax=16,
    ymin=1e-4, ymax=1.5,
    ymode=log,
    mark repeat={1},
    grid=both,
    grid style={line width=0.2pt, draw=gray!30},
    major grid style={line width=0.4pt, draw=gray!60},
    width=10cm,
    height=8cm,
    mark size=1.5pt,
    legend pos= north east,
    legend style={font=\scriptsize,legend columns=1}
]

\addplot[
    color=blue,
    mark=*,
    mark options = solid,
    loosely dashed,
    thick,
] coordinates {
   (0, {1})
    (0, {1})
    (0.5, {1})
    (1, {1})
    (1.5, {1})
    (2, {1})
    (2.5, {1})
    (3, {1})
    (3.5, {1})
    (4, {1})
    (4.5, {1})
    (5, {1})
    (5.5, {1})
    (6, {1})
    (6.5, {1})
    (7, {1})
    (7.5, {1})
    (8, {1})
    (8.5, {1})
    (9, {1})
    (9.5, {1})
    (10, {0.9997146118721462})
    (10.5, {0.9877283105022832})
    (11, {0.916095890410959})
    (11.5, {0.6649543378995434})
    (12, {0.3541666666666667})
    (12.5, {0.13955479452054795})
    (13, {0.040872687103010216})
    (13.5, {0.009661364301746852})
    (14, {0.0018780096308186195})
    (14.5, {0.0002611046261596117})
    (15, {6.1436382625791e-5})
    (15.5, {1.535909565644775e-5})
    
};
\addlegendentry{FQWD, $K = 4$}

\addplot[
    color=olive,
    mark=triangle*,
    mark options = solid,
    dashdotted,
    thick,
] coordinates {
    (0, {1})
    (0.5, {1})
    (1, {1})
    (1.5, {1})
    (2, {1})
    (2.5, {1})
    (3, {1})
    (3.5, {1})
    (4, {1})
    (4.5, {1})
    (5, {1})
    (5.5, {1})
    (6, {1})
    (6.5, {1})
    (7, {1})
    (7.5, {1})
    (8, {1})
    (8.5, {1})
    (9, {1})
    (9.5, {1})
    (10, {0.9988584474885844})
    (10.5, {0.9803082191780822})
    (11, {0.8276255707762558})
    (11.5, {0.4531963470319635})
    (12, {0.12014840182648402})
    (12.5, {0.020509019026439337})
    (13, {0.0010941820107065122})
    (13.5, {4.6077286969343246e-5})
    (14, {7.679547828223874e-6})
    
};
\addlegendentry{FQWD, $K = 16$}

\addplot[
    color=magenta,
    mark=square*,
    mark options = solid,
    dotted,
    thick,
] coordinates {
    (0, {1})
    (0.5, {1})
    (1, {1})
    (1.5, {1})
    (2, {1})
    (2.5, {1})
    (3, {1})
    (3.5, {1})
    (4, {1})
    (4.5, {1})
    (5, {1})
    (5.5, {1})
    (6, {1})
    (6.5, {1})
    (7, {1})
    (7.5, {1})
    (8, {1})
    (8.5, {1})
    (9, {1})
    (9.5, {1})
    (10, {0.9982871824150727})
    (10.5, {0.9797316585783614})
    (11, {0.8212960319725949})
    (11.5, {0.4416214673137311})
    (12, {0.09734513274336283})
    (12.5, {0.009660842754367934})
    (13, {0.00038651763388103144})
    (13.5, {1.5357326596586068e-5})

};
\addlegendentry{FQWD, $K = 64$}

\addplot[
    color=cyan,
    mark=star,
    mark options = solid,
    solid,
    thick,
] coordinates {
   (0, {1})
    (0.5, {1})
    (1, {1})
    (1.5, {1})
    (2, {1})
    (2.5, {1})
    (3, {1})
    (3.5, {1})
    (4, {1})
    (4.5, {1})
    (5, {1})
    (5.5, {1})
    (6, {1})
    (6.5, {1})
    (7, {1})
    (7.5, {1})
    (8, {1})
    (8.5, {1})
    (9, {1})
    (9.5, {1})
    (10, {0.997724039829303})
    (10.5, {0.9769559032716928})
    (11, {0.797724039829303})
    (11.5, {0.40056899004267427})
    (12, {0.07795163584637269})
    (12.5, {0.007495096665732698})
    (13, {0.0003678293271921828})
    (13.5, {7.678958118962419e-6})
};
\addlegendentry{ML}

\end{axis}

\end{tikzpicture}

\vspace{-3em}
    \caption{BLER performance of a $D_4$ $r=16$ MLCC with $n_{\text{bit}} = 1024$, $R = 3/4$ demonstrating the performance of ML demodulation compared to FQWD with varying $K$. }
    \label{fig:K}
\end{figure}

With the reduction in complexity from using FQWD, it is now tractable to simulate MLCCs with $r=8$ and $r=16$, on both $D_4$ and $E_8$. Figures \ref{fig:r8} and \ref{fig:r16} compare those MLCCs to traditional MLC polar codes. Much like the results of \cite{CosetCodes}, we see gains over MLC polar codes across a variety of rates. For $D_4$ MLCCs with both $r=8$ and $r=16$, we observe gains between $0.45 \text{ dB}$ and $0.6 \text{ dB}$. Figure \ref{fig:r16} also shows the first published instance of $E_8$ MLCC results, which achieve gains between $0.8\text{ dB}$ and $1.2\text{ dB}$ over MLC polar codes with $r=16$. This demonstrates shaping gain over $D_4$ at all rates shown, as well as additional coding gain at $R = 15/16$.

\begin{figure}[!t]
    \centering
    \begin{tikzpicture}
\begin{axis}[
    scale = 0.7,
    name = r8,
    xshift=2cm,
    xlabel={\scriptsize{$E_b/N_0$ (dB)}},
    ylabel={\scriptsize{BLER}},
    xmin=3, xmax=13,
    ymin=1e-4, ymax=1.5,
    ymode=log,
    mark repeat={2.5},
    grid=both,
    grid style={line width=0.2pt, draw=gray!30},
    major grid style={line width=0.4pt, draw=gray!60},
    width=10cm,
    height=8cm,
    mark size=1.5pt,
    % legend pos= outer north east,
    legend style={font=\scriptsize,anchor=center, at={(0.5,-0.4)},legend columns=2}
]

\addplot[
    color=blue,
    mark=*,
    mark options = solid,
    loosely dashed,
    thick,
] coordinates {
   (0, {1})
    (0.5, {1})
    (1, {1})
    (1.5, {1})
    (2, {1})
    (2.5, {1})
    (3, {0.9977266268826371})
    (3.5, {0.9789712986643933})
    (4, {0.868428530832623})
    (4.5, {0.5723216822961068})
    (5, {0.2179596476271668})
    (5.5, {0.038244667810737924})
    (6, {0.004355856973506465})
    (6.5, {0.0003302243605742832})
    (7, {3.4558363315913355e-5})
    (7.5, {3.8398181462125956e-6})
    
};
\addlegendentry{$D_4$ MLCC, $R = 1/2$}

\addplot[
    color=magenta,
    mark=*,
    mark options = solid,
    dotted,
    thick,
] coordinates {
    (0, {1})
    (0.5, {1})
    (1, {1})
    (1.5, {1})
    (2, {1})
    (2.5, {1})
    (3, {1})
    (3.5, {0.9997158283603297})
    (4, {0.9917590224495595})
    (4.5, {0.9096334185848253})
    (5, {0.6146632566069906})
    (5.5, {0.23643080420574028})
    (6, {0.0407042786823173})
    (6.5, {0.003853847964226877})
    (7, {0.0001689519984333542})
    (7.5, {1.1519454438637787e-5})
    
};
\addlegendentry{MLC, $R = 1/2$}

\addplot[
    color=blue,
    mark=square*,
    mark options = solid,
    loosely dashed,
    thick,
] coordinates {
    (0, {1})
    (0.5, {1})
    (1, {1})
    (1.5, {1})
    (2, {1})
    (2.5, {1})
    (3, {1})
    (3.5, {1})
    (4, {1})
    (4.5, {1})
    (5, {1})
    (5.5, {1})
    (6, {1})
    (6.5, {0.9982949701619779})
    (7, {0.9701619778346121})
    (7.5, {0.7985223074737141})
    (8, {0.4282466609832339})
    (8.5, {0.10514350667803353})
    (9, {0.015077774596812673})
    (9.5, {0.001194708102759556})
    (10, {9.215156628865326e-5})
    (10.5, {5.759472893040829e-6})

};
\addlegendentry{$D_4$ MLCC, $R = 3/4$}

\addplot[
    color=magenta,
    mark=square*,
    mark options = solid,
    dotted,
    thick,
] coordinates {
    (0, {1})
    (0.5, {1})
    (1, {1})
    (1.5, {1})
    (2, {1})
    (2.5, {1})
    (3, {1})
    (3.5, {1})
    (4, {1})
    (4.5, {1})
    (5, {1})
    (5.5, {1})
    (6, {1})
    (6.5, {1})
    (7, {0.9991474850809889})
    (7.5, {0.975561238988349})
    (8, {0.789712986643933})
    (8.5, {0.388178459789713})
    (9, {0.08019133370849747})
    (9.5, {0.008008757778289929})
    (10, {0.000604744653769287})
    (10.5, {2.8797364465204143e-5})
    
};
\addlegendentry{MLC, $R = 3/4$}

\addplot[
    color=blue,
    mark=triangle*,
    mark options = solid,
    loosely dashed,
    thick,
] coordinates {
   (0, {1})
    (0.5, {1})
    (1, {1})
    (1.5, {1})
    (2, {1})
    (2.5, {1})
    (3, {1})
    (3.5, {1})
    (4, {1})
    (4.5, {1})
    (5, {1})
    (5.5, {1})
    (6, {1})
    (6.5, {1})
    (7, {1})
    (7.5, {1})
    (8, {1})
    (8.5, {0.9960215970446149})
    (9, {0.9462915601023018})
    (9.5, {0.7075873827791986})
    (10, {0.3108837737993748})
    (10.5, {0.06194941744813868})
    (11, {0.0039392234102419805})
    (11.5, {0.000215039311874202})
    (12, {6.719978496068813e-6})

};
\addlegendentry{$D_4$ MLCC, $R = 7/8$}

\addplot[
    color=magenta,
    mark=triangle*,
    mark options = solid,
    dotted,
    thick,
] coordinates {
   (0, {1})
    (0.5, {1})
    (1, {1})
    (1.5, {1})
    (2, {1})
    (2.5, {1})
    (3, {1})
    (3.5, {1})
    (4, {1})
    (4.5, {1})
    (5, {1})
    (5.5, {1})
    (6, {1})
    (6.5, {1})
    (7, {1})
    (7.5, {1})
    (8, {1})
    (8.5, {1})
    (9, {0.9988633134413185})
    (9.5, {0.9698778061949418})
    (10, {0.7774936061381074})
    (10.5, {0.35493037794828075})
    (11, {0.07247579529737207})
    (11.5, {0.00584726932522512})
    (12, {0.0002620791613466837})
    (12.5, {1.3439956992137626e-5})
};
\addlegendentry{MLC, $R = 7/8$ }

\end{axis}

\end{tikzpicture}

\vspace{-3em}
    \caption{BLER results comparing $D_4$ MLCCs with FQWD for $r=8$ to MLC polar codes on a $64$-QAM with $n_{\text{bit}} = 768$.}
    \label{fig:r8}
\end{figure}

\begin{figure}[!t]
    \centering
    \begin{tikzpicture}
\begin{axis}[
    scale = 0.7,
    name = r16,
    xshift=2cm,
    xlabel={\scriptsize{$E_b/N_0$ (dB)}},
    ylabel={\scriptsize{BLER}},
    xmin=5, xmax=19,
    ymin=1e-4, ymax=1.5,
    ymode=log,
    mark repeat={2.5},
    grid=both,
    grid style={line width=0.2pt, draw=gray!30},
    major grid style={line width=0.4pt, draw=gray!60},
    width=10cm,
    height=8cm,
    mark size=1.5pt,
    % legend pos= outer north east,
    legend style={font=\scriptsize,anchor=center, at={(0.4,-0.4)},legend columns=3}
]

\addplot[
    color=olive,
    mark=*,
    mark options = solid,
    dashdotdotted,
    thick,
] coordinates {
    (0, {1})
    (0.5, {1})
    (1, {1})
    (1.5, {1})
    (2, {1})
    (2.5, {1})
    (3, {1})
    (3.5, {1})
    (4, {1})
    (4.5, {0.9997158283603297})
    (5, {0.9991474850809889})
    (5.5, {0.9752770673486786})
    (6, {0.8422847399829497})
    (6.5, {0.49701619778346123})
    (7, {0.13895993179880647})
    (7.5, {0.01916682964991199})
    (8, {0.001457011707669297})
    (8.5, {7.167871551741793e-5})
    (9, {5.119908251244138e-6})
    
};
\addlegendentry{$E_8$ MLCC, $R = \frac{1}{2}$}

\addplot[
    color=blue,
    mark=*,
    mark options = solid,
    loosely dashed,
    thick,
] coordinates {
    (0, {1})
    (0.5, {1})
    (1, {1})
    (1.5, {1})
    (2, {1})
    (2.5, {1})
    (3, {1})
    (3.5, {1})
    (4, {1})
    (4.5, {1})
    (5, {1})
    (5.5, {0.9954324864401941})
    (6, {0.9349129317727662})
    (6.5, {0.6974022266628604})
    (7, {0.306023408506994})
    (7.5, {0.06666666666666667})
    (8, {0.006501890620676934})
    (8.5, {0.0003484725287489836})
    (9, {5.11959370904325e-6})
    
};
\addlegendentry{$D_4$ MLCC, $R = \frac{1}{2}$}

\addplot[
    color=magenta,
    mark=*,
    mark options = solid,
    dotted,
    thick,
] coordinates {
    (0, {1})
    (0.5, {1})
    (1, {1})
    (1.5, {1})
    (2, {1})
    (2.5, {1})
    (3, {1})
    (3.5, {1})
    (4, {1})
    (4.5, {1})
    (5, {1})
    (5.5, {1})
    (6, {0.9982871824150727})
    (6.5, {0.9608906651441621})
    (7, {0.7356551527262346})
    (7.5, {0.3200114187838995})
    (8, {0.05566644488800177})
    (8.5, {0.006153595370540172})
    (9, {0.000322529450010495})
    (9.5, {1.5358545238595e-5})
    
};
\addlegendentry{MLC, $R = \frac{1}{2}$}

\addplot[
    color=olive,
    mark=square*,
    mark options = solid,
    dashdotdotted,
    thick,
] coordinates {
    (0, {1})
    (0.5, {1})
    (1, {1})
    (1.5, {1})
    (2, {1})
    (2.5, {1})
    (3, {1})
    (3.5, {1})
    (4, {1})
    (4.5, {1})
    (5, {1})
    (5.5, {1})
    (6, {1})
    (6.5, {1})
    (7, {1})
    (7.5, {1})
    (8, {1})
    (8.5, {1})
    (9, {1})
    (9.5, {0.9997145304025121})
    (10, {0.995717956037682})
    (10.5, {0.9172138167285184})
    (11, {0.5880673708250072})
    (11.5, {0.1881244647445047})
    (12, {0.02157332752233602})
    (12.5, {0.0014276002719238613})
    (13, {4.60719797897582e-5})
    (13.5, {0})

};
\addlegendentry{$E_8$ MLCC, $R = \frac{3}{4}$}

\addplot[
    color=blue,
    mark=square*,
    mark options = solid,
    loosely dashed,
    thick,
] coordinates {
    (0, {1})
    (0.5, {1})
    (1, {1})
    (1.5, {1})
    (2, {1})
    (2.5, {1})
    (3, {1})
    (3.5, {1})
    (4, {1})
    (4.5, {1})
    (5, {1})
    (5.5, {1})
    (6, {1})
    (6.5, {1})
    (7, {1})
    (7.5, {1})
    (8, {1})
    (8.5, {1})
    (9, {1})
    (9.5, {1})
    (10, {0.9982871824150727})
    (10.5, {0.9797316585783614})
    (11, {0.8212960319725949})
    (11.5, {0.4416214673137311})
    (12, {0.09734513274336283})
    (12.5, {0.009660842754367934})
    (13, {0.00038651763388103144})
    (13.5, {1.5357326596586068e-5})

};
\addlegendentry{$D_4$ MLCC, $R = \frac{3}{4}$}

\addplot[
    color=magenta,
    mark=square*,
    mark options = solid,
    dotted,
    thick,
] coordinates {
    (0, {1})
    (0.5, {1})
    (1, {1})
    (1.5, {1})
    (2, {1})
    (2.5, {1})
    (3, {1})
    (3.5, {1})
    (4, {1})
    (4.5, {1})
    (5, {1})
    (5.5, {1})
    (6, {1})
    (6.5, {1})
    (7, {1})
    (7.5, {1})
    (8, {1})
    (8.5, {1})
    (9, {1})
    (9.5, {1})
    (10, {1})
    (10.5, {1})
    (11, {0.9857265201256066})
    (11.5, {0.8307165286896946})
    (12, {0.4142163859548958})
    (12.5, {0.08094435075885328})
    (13, {0.007490499853843906})
    (13.5, {0.0003532185117214795})
    (14, {7.678663298293034e-6})
    
};
\addlegendentry{MLC, $R = \frac{3}{4}$}

\addplot[
    color=olive,
    mark=star,
    mark options = solid,
    dashdotdotted,
    thick,
] coordinates {
   (0, {1})
    (0.5, {1})
    (1, {1})
    (1.5, {1})
    (2, {1})
    (2.5, {1})
    (3, {1})
    (3.5, {1})
    (4, {1})
    (4.5, {1})
    (5, {1})
    (5.5, {1})
    (6, {1})
    (6.5, {1})
    (7, {1})
    (7.5, {1})
    (8, {1})
    (8.5, {1})
    (9, {1})
    (9.5, {1})
    (10, {1})
    (10.5, {1})
    (11, {1})
    (11.5, {1})
    (12, {1})
    (12.5, {1})
    (13, {1})
    (13.5, {0.9985726520125606})
    (14, {0.9794461889808735})
    (14.5, {0.8175849272052527})
    (15, {0.41450185555238367})
    (15.5, {0.10048529831572937})
    (16, {0.007154129405576012})
    (16.5, {0.00020468524526409514})
    (17, {9.59775796373967e-6})

};
\addlegendentry{$E_8$ MLCC, $R = \frac{15}{16}$}

\addplot[
    color=blue,
    mark=star,
    mark options = solid,
    loosely dashed,
    thick,
] coordinates {
    (0, {1})
    (0.5, {1})
    (1, {1})
    (1.5, {1})
    (2, {1})
    (2.5, {1})
    (3, {1})
    (3.5, {1})
    (4, {1})
    (4.5, {1})
    (5, {1})
    (5.5, {1})
    (6, {1})
    (6.5, {1})
    (7, {1})
    (7.5, {1})
    (8, {1})
    (8.5, {1})
    (9, {1})
    (9.5, {1})
    (10, {1})
    (10.5, {1})
    (11, {1})
    (11.5, {1})
    (12, {1})
    (12.5, {1})
    (13, {1})
    (13.5, {1})
    (14, {0.9988581216100485})
    (14.5, {0.9743077362260919})
    (15, {0.7796174707393663})
    (15.5, {0.38452754781615756})
    (16, {0.09648872395089923})
    (16.5, {0.008017402113113735})
    (17, {0.0004368126267969986})
    (17.5, {9.59775796373967e-6})

};
\addlegendentry{$D_4$ MLCC, $R =\frac{15}{16}$}

\addplot[
    color=magenta,
    mark=star,
    mark options = solid,
    dotted,
    thick,
] coordinates {
    (0, {1})
    (0.5, {1})
    (1, {1})
    (1.5, {1})
    (2, {1})
    (2.5, {1})
    (3, {1})
    (3.5, {1})
    (4, {1})
    (4.5, {1})
    (5, {1})
    (5.5, {1})
    (6, {1})
    (6.5, {1})
    (7, {1})
    (7.5, {1})
    (8, {1})
    (8.5, {1})
    (9, {1})
    (9.5, {1})
    (10, {1})
    (10.5, {1})
    (11, {1})
    (11.5, {1})
    (12, {1})
    (12.5, {1})
    (13, {1})
    (13.5, {1})
    (14, {1})
    (14.5, {1})
    (15, {0.9891521552954611})
    (15.5, {0.8935198401370255})
    (16, {0.5463888095917785})
    (16.5, {0.1592920353982301})
    (17, {0.018497438816163916})
    (17.5, {0.000883120872369835})
    (18, {1.9198279834126863e-5})

};
\addlegendentry{MLC, $R = \frac{15}{16}$ }

\end{axis}

\end{tikzpicture}

\vspace{-3em}
    \caption{BLER results comparing $r = 16$ $E_8$ and $D_4$ MLCCs to MLC polar codes on a $256$-QAM with $n_{\text{bit}}=1024$.}
    \label{fig:r16}
\end{figure}

\section{Discussion and Future Work}

We have proposed FQWD, a fast quantizer-based demodulator for multilevel coset codes that reduces the demodulation complexity from $O(dr^d)$ to $O(dK  2^d\log r)$ per lattice symbol. The complexity in $d$ assumes an $O(d)$ radial projection operation, which is immediate for all $D_n$ lattices following the work in \ref{sec:D4voronoi}. FQWD resolves the demodulation bottleneck of MLCCs, aligning the asymptotic complexity order with that of the decoder. We developed efficient radial Voronoi projection functions for $D_4$ and $E_8$ and demonstrated that incorporating the bounded shaping constraint $r\mathcal{V}$ into demodulation can be achieved without brute-force enumeration.

While FQWD has negligible loss under the AWGN channel, its current form uses the lattice quantizer, and hence the geometry of the Voronoi cell. In fading channels, channel phase and amplitude variations distort the Voronoi boundary, impacting direct radial projection toward the origin and use of the coset quantizer. Future work could extend the current framework to fading channels by incorporating adjustments to the nearest coset representatives based on the channel state. An additional area for future investigation is determining whether $K$ must grow with $d$ to maintain performance.

\bibliographystyle{IEEEtran}
\bibliography{bibliography}

@unpublished{CosetCodes,
  author    = {Bertholet, Leopold and Makdad, Chloe and Mackes, Stephen and Chew, Daniel and Robinson, Matthew},
  title     = {Multilevel Coset Codes on Lattices},
  note      = {To appear in \textit{Proceedings of the 60th Asilomar Conference on Signals, Systems, and Computers}},
  year      = {2026},
  url       = {https://arxiv.org/abs/2604.06125}
}

@article{li2025coded,
  title={Coded modulation schemes for Voronoi constellations},
  author={Li, Shen and Mirani, Ali and Karlsson, Magnus and Agrell, Erik},
  journal={IEEE Transactions on Communications},
  year={2025},
  publisher={IEEE}
}

@article{forney1988coset,
  title={Coset codes. I. Introduction and geometrical classification},
  author={Forney, G David},
  journal={IEEE Transactions on Information Theory},
  volume={34},
  number={5},
  pages={1123--1151},
  year={1988},
  publisher={IEEE}
}

@article{frey2020twostage,
  author    = {Frey, Felix and Stern, Sebastian and Fischer, Johannes K. and Fischer, Robert F. H.},
  journal   = {Journal of Lightwave Technology}, 
  title     = {Two-Stage Coded Modulation for Hurwitz Constellations in Fiber-Optical Communications}, 
  year      = {2020},
  volume    = {38},
  number    = {11},
  pages     = {2863-2873},
  doi       = {10.1109/JLT.2020.2974241}
}

@unpublished{TLDC,
  author    = {Bertholet, Leopold and Makdad, Chloe and Mackes, Stephen and Chew, Daniel and Robinson, Matthew},
  title     = {Two-Level Decorrelated Coded Modulation on the {$D_4$} Lattice}, 
  note      = {To appear in \textit{Proceedings of the 60th Asilomar Conference on Signals, Systems, and Computers}},
  year      = {2026},
  url       = {https://arxiv.org/abs/2608.16662}
}

@book{conway1999sphere,
  author    = {Conway, John H. and Sloane, Neil J. A.},
  title     = {Sphere Packings, Lattices and Groups},
  edition   = {3rd},
  publisher = {Springer},
  year      = {1999},
  series    = {Grundlehren der mathematischen Wissenschaften},
  volume    = {290},
  isbn      = {978-0-387-98585-5},
  doi       = {10.1007/978-1-4757-6568-7}
}

@ARTICLE{conway1983fast,
  author={Conway, J. and Sloane, N.},
  journal={IEEE Transactions on Information Theory}, 
  title={A fast encoding method for lattice codes and quantizers}, 
  year={1983},
  volume={29},
  number={6},
  pages={820-824},
  doi={10.1109/TIT.1983.1056761}}

@ARTICLE{BICM,
  author={Caire, G. and Taricco, G. and Biglieri, E.},
  journal={IEEE Transactions on Information Theory}, 
  title={Bit-interleaved coded modulation}, 
  year={1998},
  volume={44},
  number={3},
  pages={927-946},
  doi={10.1109/18.669123}}

@techreport{multiplexing20185g,
    author = "{3GPP}",
    title = "{NR; Multiplexing and channel coding}",
    institution = "3rd Generation Partnership Project (3GPP)",
    series = "TS",
    number = "38.212",
    version = "16.5.0"
}

@article{viterbo1999universal,
  author={Viterbo, Emanuele and Boutros, Joseph},
  journal={IEEE Transactions on Information Theory},
  title={A universal lattice code decoder for fading channels}, 
  year={1999},
  volume={45},
  number={5},
  pages={1639-1642},
  doi={10.1109/18.771238}
}

@ARTICLE{forney_efficient_modulation,
  author={Forney, G. and Gallager, R. and Lang, G. and Longstaff, F. and Qureshi, S.},
  journal={IEEE Journal on Selected Areas in Communications}, 
  title={Efficient Modulation for Band-Limited Channels}, 
  year={1984},
  volume={2},
  number={5},
  pages={632-647},
  doi={10.1109/JSAC.1984.1146101}}

\end{document}